\setlist{nosep}
\pgfplotsset{compat=1.5} 
\tikzstyle{normalline} = [line width=.8pt]
\pgfplotsset{
    discard if not/.style 2 args={
        x filter/.code={
            \edef\tempa{\thisrow{#1}}
            \edef\tempb{#2}
            \ifx\tempa\tempb
            \else
                
            \fi
        }
    }
}
\tikzstyle{vert}=[circle,inner sep=1.5,fill=white,draw,minimum size=.3cm]
\tikzstyle{vert2}=[circle,inner sep=1.5,fill=white,draw,minimum size=.2cm]
\definecolor{ourblue}{RGB}{130,200,250}
\definecolor{ourdarkblue}{RGB}{30,60,140}
\definecolor{ourgreen}{RGB}{0,100,0}
\definecolor{ourred}{RGB}{180,20,30}
\definecolor{lipicsyellow}{RGB}{220,200,0}
\definecolor{ourorange}{RGB}{230,100,0}
\definecolor{ourgray}{RGB}{100,100,100}
\theoremstyle{plain}
\newtheorem{theorem}{Theorem}
\newtheorem{observation}[theorem]{Observation}
\newtheorem{proposition}[theorem]{Proposition}
\theoremstyle{definition}
\newtheorem{definition}[theorem]{Definition}
\theoremstyle{remark}
\crefname{equation}{Equality}{Equalities}
\Crefname{equation}{Inequality}{Inequalities}
\crefname{section}{Section}{Sections}
\crefname{subsection}{Section}{Sections}
\crefname{subsubsection}{Section}{Sections}
\crefname{theorem}{Theorem}{Theorems}
\crefname{observation}{Observation}{Observations}
\crefname{proposition}{Proposition}{Propositions}
\crefname{definition}{Definition}{Definitions}
\crefname{corollary}{Corollary}{Corollaries}
\crefname{lemma}{Lemma}{Lemmata}
\crefname{claim}{Claim}{Claims}
\crefname{figure}{Figure}{Figures}
\crefname{table}{Table}{Tables}
\crefname{algocf}{Algorithm}{Algorithms}
\Crefname{algocf}{Algorithm}{Algorithms}
\crefname{algorithm}{Algorithm}{Algorithms}
\crefname{example}{Example}{Examples}
\crefname{reduction}{Reduction}{Reductions}
\crefname{part}{Part}{Parts}
\crefname{line}{Line}{Lines}
\crefname{appendix}{Appendix}{Appendices}
\crefname{chapter}{Chapter}{Chapters}
\crefname{footnote}{Footnote}{Footnotes}
\newcommand{\classNP}{\text{NP}\xspace}
\newcommand{\NP}{\text{NP}\xspace}
\newcommand{\classcoNP}{\text{coNP}\xspace}
\newcommand{\FPT}{\text{FPT}\xspace}
\newcommand{\wx}[1]{\text{W[#1]}\xspace}
\newcommand{\wone}{\wx{1}}
\newcommand{\wtwo}{\wx{2}}
\newcommand{\polyadvice}{{\text{poly}}\xspace}
\newcommand{\NoKernelAssume}{\classNP $\subseteq$ \classcoNP/\polyadvice}
\newcommand{\N}{\ensuremath{\mathbb{N}}}
\newcommand{\truevalue}{\textrm{true}}
\newcommand{\falsevalue}{\textrm{false}}
\newcommand{\yes}{YES}
\newcommand{\no}{NO}
\newcommand{\proofpar}[1]{\smallskip\noindent\emph{#1}}
\newcommand{\lifetime}{\ensuremath{\ell}}
\newcommand{\TG}{\ensuremath{\mathcal{G}}\xspace}
\newcommand{\TGfull}{\ensuremath{\mathcal{G}=(V, E_1, E_2, \ldots, E_\lifetime)}\xspace}
\newcommand{\TGcompact}{\ensuremath{\mathcal{G}=(V, (E_i)_{i\in[\lifetime]})}\xspace}
\newcommand{\layer}{layer\xspace}
\newcommand{\timestep}{time step\xspace}
\newcommand{\RArrow}{\smallskip$(\Rightarrow)$:\xspace}
\newcommand{\LArrow}{\smallskip$(\Leftarrow)$:\xspace}
\newcommand{\nonstrpath}[1]{temporal~$(#1)$-path}
\newcommand{\nonstrsep}[1]{temporal~$(#1)$-sep\-a\-ra\-tor}
\newcommand{\nonstrseps}[1]{temporal~$(#1)$-sep\-a\-ra\-tors}
\newcommand{\probDef}[3]{
\begin{center}   
    \fbox{~\begin{minipage}{.95\textwidth}
      \vspace{2pt} 
     
      \noindent
      \normalsize\textsc{#1}
      
      \vspace{4pt}
      \setlength{\tabcolsep}{3pt}
      \renewcommand{\arraystretch}{1.0}
      \begin{tabularx}{\textwidth}{@{}lX@{}}
	\normalsize\emph{Input:} 	& \normalsize#2 \\
	\normalsize\emph{Question:} 	& \normalsize#3
      \end{tabularx}
    \end{minipage}}
    \end{center}
}
\newcommand{\probSeparator}{\textsc{Temporal $(s,z)$-Sep\-a\-ra\-tion}\xspace}
\newcommand{\probRSeparator}{\textsc{Restless Temporal $(s,z)$-Sep\-a\-ra\-tion}\xspace}
\newcommand{\probDefRSeparator}{
\probDef{\probRSeparator}
{A temporal graph~$\TGcompact$, two distinct vertices~$s,z\in V$, and two integers~$k \in \N$ and $\Delta\le\lifetime$.}
{Does $\TG$ admit a $\Delta$-restless \nonstrsep{s,z} of size at most~$k$?   }}
\newcommand{\probRestlessPath}{\textsc{Restless Temporal $(s,z)$-Path}\xspace}
\newcommand{\wait}{\ensuremath{\Delta}}
\title{The Complexity of Finding Temporal Separators under Waiting Time Constraints
\footnote{This work is based on a previously unpublished chapter of the author's PhD-thesis~\cite{Molter20}.}}
\author{Hendrik Molter\thanks{Supported by the DFG,
project MATE (NI 369/17), and by the ISF, grant No.~1070/20. Main part of this work was done while the author was affiliated with TU~Berlin.}}
\date{ }
\affil{Department of Industrial Engineering and Management, Ben-Gurion~University~of~the~Negev, 
Beer-Sheva, 
Israel, 
\texttt{molterh@post.bgu.ac.il}}
\begin{document}

%
%
%
%
%
%
%

\maketitle

\begin{abstract}
In this work, we investigate the computational complexity of \probRSeparator, where we are asked whether it is possible to destroy all restless temporal paths between two distinct vertices~$s$ and~$z$ by deleting at most $k$ vertices from a temporal graph.
A temporal graph has a fixed vertex  but the edges have (discrete) time stamps.  A restless temporal path uses edges with non-decreasing time stamps and the time spent at each vertex must not exceed a given duration $\Delta$.

    \probRSeparator naturally generalizes the \NP{}-hard \probSeparator problem. We show that \probRSeparator is complete for $\Sigma_2^\text{P}$, a complexity class located in the second level of the polynomial time hierarchy. We further provide some insights in the parameterized complexity of \probRSeparator parameterized by the separator size $k$.
\end{abstract}

\section{Introduction}
Capturing dynamic changes in a network plays an increasingly important role in network analysis and algorithmics and temporal graphs are a popular model that is able to represent such changes over time~\cite{Cas+12,Hol15,HS19,LVM18,Mic16}. Especially the notion of connectivity is much more intricate in the temporal setting and path-related problems were among the first ones studied on temporal graphs~\cite{Ber96,KKK02}. In particular, vertex separators are NP-hard to find in temporal graphs~\cite{KKK02} whereas it is possible to find them in polynomial time in static graphs~\cite[Theorem~6.8]{AMO93}.

In this work we study the computational complexity of finding vertex separators in a temporal graph that destroy all temporal paths that obey certain \emph{waiting time constraints}. We call such paths \emph{restless temporal paths} and their ``waiting'' or ``pausing'' time at a vertex is restricted to some prescribed duration $\Delta$. Restless temporal paths naturally model infection transmission chains of diseases that grant immunity upon recovery~\cite{Hol16}. Such transmission routes are captured by the well-established \emph{SIR-model} (Susceptible-Infected-Recovered)~\cite{Bar16,kermack1927contribution,New18}. This also motivates the search of vertex separators since they naturally model breaking infection transmission by vaccinations. This work is focused on the analysis of the computational complexity of \probRSeparator, the problem of deciding whether a temporal graph admits a separator of size at most $k$ that destroys all restless temporal paths between two designated vertices $s$ and $z$ (a formal problem definition is given in \cref{sec:sep:prelims}).

\paragraph{Related Work.}

The problem of finding minimum temporal separators was first studied by \citet{KKK02} and they proved it is \NP-hard.
In contrast, \citet{Ber96} proved earlier that destroying all temporal path between two designated vertices by deleting a minimum number of edges instead of vertex can be done in polynomial time. 
\citet{Zsc+19} and \citet{Flu+19a} further studied the computational complexity of finding temporal separators of bounded size and provide (parameterized) algorithms as well as hardness result for several restricted cases. \citet{maack21} studied the problem on specific temporal graph classes.

The computation of restless temporal paths has been studied by \citet{HMZ19}. They show that deciding whether a restless temporal path exists between two vertices is NP-complete and they give several further (parameterized) hardness and algorithmic results. Notably, finding restless temporal \emph{walks} between two vertices in known to be polynomial-time solvable~\cite{Him+20}.

\paragraph{Our Contributions.}


We analyze the computational complexity of \probRSeparator and show that this problem is $\Sigma_2^\text{P}$-complete. We further give some insights on the parameterized complexity of \probRSeparator parameterized by the separator size.

\section{Preliminaries and Basic Observations}\label{sec:sep:prelims}

In this section, we formally introduce the most important concepts related to restless temporal separators and give the formal problem definition of \probRSeparator. We further discuss some basic observations.

\paragraph{Computational Complexity.}
In this work we show that the problem under consideration is complete for the complexity class $\Sigma_2^\text{P}$~\cite{AB09,Sto76}. This complexity class is located in the second level of the polynomial time hierarchy and contains both NP and coNP.
It is closed under polynomial-time many-one reductions and, intuitively, contains all problems that are at most as hard as the problem \textsc{$\exists\forall$-SAT}~\cite{AB09,Sto76}, where we are given a Boolean formula $\phi$ in conjunctive normal form and the variables of $\phi$ are partitioned into two sets $X$ and $Y$, and we are asked to decide whether there exists an assignment for all variables in $X$ such that for all possible assignments for the variables in $Y$, the formula $\phi$ evaluates to \truevalue. The problem \textsc{$\exists\forall$-SAT} is complete for $\Sigma_2^\text{P}$~\cite{AB09,Sto76}. The class $\Sigma_2^\text{P}$ can also be characterized as the set of all problems that can be solved by an NP-machine that has oracle access to an NP-complete problem~\cite{AB09,Sto76}.

We use standard notation and terminology from parameterized
complexity theory~\cite{Cyg+15,DF13,FG06,Nie06}
and give here a brief overview of the most important concepts that are used in this paper.
A \emph{parameterized problem} is a language $L\subseteq \Sigma^* \times \N$, where $\Sigma$ is a finite alphabet. We call the second component
the \emph{parameter} of the problem.
A parameterized problem is in the complexity class XP if there is an
algorithm that solves each instance~$(I,r)$ in~$|I|^{f(r)}$ time, for some
computable function $f$.
A parameterized problem is \emph{fixed-pa\-ram\-e\-ter tractable} (i.e., in the complexity class \FPT{})
if there is an algorithm that solves each instance~$(I, r)$ in~$f(r) \cdot |I|^{O(1)}$ time,
for some computable function~$f$. 
A decidable parameterized problem $L$ admits a \emph{polynomial kernel} if there is a polynomial-time algorithm that transforms each instance $(I,r)$ into an instance $(I', r')$ such that $(I,r)\in L$ if and only if $(I',r')\in L$ and $|(I', r')|\in r^{O(1)}$. 
If a parameterized problem is hard for the parameterized complexity class \wone or \wtwo\, then it is (presumably) not in~\FPT{}.
The complexity classes \wone\ and \wtwo\ is closed under parameterized reductions, which may run in \FPT-time and additionally set the new parameter to a value that exclusively depends on the old parameter. They are part of the W-hierarchy and it is known that $\text{FPT}\subseteq \wone \subseteq \wtwo \subseteq \ldots \subseteq \text{XP}$.
If a parameterized problem is \NP-hard (resp.\ coNP-hard) for constant parameter values, then the problem is para-NP-hard (resp.\ para-coNP-hard).

\paragraph{Temporal Graphs, Paths, and Separators.}
An (undirected, simple) \emph{temporal graph} is a tuple~$\TGfull$ (or \TGcompact for short), with $E_i\subseteq\binom{V}{2}$ for all $i\in[\lifetime]$.
We call $\lifetime(\TG) := \lifetime$ the \emph{lifetime} of $\TG$. 
We call the graph $G_i(\TG) = (V, E_i(\TG))$ the \emph{\layer}~$i$ of $\TG$
where $E_i(\TG) := E_i$. If $E_i=\emptyset$, then $G_i$ is a \emph{trivial} layer.
We call layers $G_i$ and $G_{i+1}$ \emph{consecutive}.
We call $i$ a \emph{\timestep}. If an edge $e$ is present at time $i$, that is, $e\in E_i$, we say that $e$ has \emph{time stamp} $i$ and call the pair $(e,i)$ a \emph{time edge}.. We further denote $V(\TG):=V$. %
%

A \emph{restless} temporal path is not allowed to wait an arbitrary amount of time in a vertex, but has to leave any vertex it visits within the next $\Delta$-window, for some given value for $\Delta$. Formally, they are defined as follows.
\begin{definition}[Restless Temporal Walk / Restless Temporal Path]\label{def:rtemppath}
	A \emph{$\wait$-restless temporal walk} of 
length~$n$ from vertex $s$ to vertex $z$ in a temporal graph~$\TGcompact$ is a sequence
$P =\big( (s=v_0,v_1, t_1), (v_1, v_2, t_2), \dots, (v_{n-1}, v_n=z, t_{n} ) \big)$ such that for all $i\in[n]$ we have that $\{v_{i-1},v_i\}\in E_i$ and for all $i\in [n-1]$ we have that $t_i \leq t_{i+1} \leq t_i + \wait$.
Moreover, we call $P$ a \emph{$\Delta$-restless temporal path} of length~$n$ 
if~$v_i\neq v_j$ for all~$i, j\in \{0,\ldots,n\}$ with $i\neq j$.
We say that $P$ \emph{respects} the maximum waiting time $\wait$.
\end{definition}
We call the problem of checking whether there exists a restless tempora $(s,z)$-path in a given temporal graph for a given $\Delta$ value ``\probRestlessPath''. This problem is known to be NP-hard and has been thoroughly investigated by \citet{HMZ19}.

Now we are ready to give the formal definition of a restless temporal $(s,z)$-separator, which should destroy all restless temporal $(s,z)$-paths in a given temporal graph.

\begin{definition}[$\Delta$-Restless Temporal $(s,z)$-Separator]
Let $\TGcompact$ be a temporal graph with $s,z\in V$. Let $\Delta\le\lifetime$. A vertex set~$S\subseteq V\setminus\{s,z\}$ is a \emph{$\Delta$-restless \nonstrsep{s,z}} for $\TG$ if there is no $\Delta$-restless \nonstrpath{s,z} in~$\TG - S$.
\end{definition}

We can now formally define the (decision) problem of finding a $\Delta$-restless \nonstrsep{s,z} in a given temporal graph $\TG$ with two distinct vertices $s$ and~$z$. 

\probDefRSeparator

\paragraph{Basic Observations.}
Since \probRSeparator generalizes the NP-hard \probSeparator problem~\cite{Flu+19a,KKK02,Zsc+19}, we can observe that the problem is clearly \NP-hard. 
However, we can also observe that we presumably cannot verify in polynomial time whether a vertex set $S$ is a $\Delta$-restless \nonstrsep{s,z} for a given temporal graph $\TG$. \citet{HMZ19} showed checking whether there is a $\Delta$-restless temporal path from $s$ to $z$ in $\TG-S$ is \NP-hard. Note that \probRSeparator with $k=0$ is the complement of \probRestlessPath. Hence we can observe the following. 
\begin{observation}\label{obs:paraconp}
\probRSeparator is coNP-hard for all $\Delta\ge 1$ even if~$k=0$.
\end{observation}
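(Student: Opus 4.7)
The proof is essentially a direct unpacking of the definitions combined with the cited hardness of \probRestlessPath. The plan is to show that the complement of \probRestlessPath reduces to \probRSeparator with $k=0$ under the identity map on instances. Given an instance $(\TG, s, z, \Delta)$ of \probRestlessPath, view it as the instance $(\TG, s, z, k=0, \Delta)$ of \probRSeparator. Since any $\Delta$-restless $(s,z)$-separator of size at most~$0$ must be the empty set, this instance is a \yes-instance of \probRSeparator iff $\emptyset$ is a $\Delta$-restless $(s,z)$-separator for $\TG$, which by definition holds iff $\TG - \emptyset = \TG$ contains no $\Delta$-restless temporal $(s,z)$-path, i.e., iff $(\TG, s, z, \Delta)$ is a \no-instance of \probRestlessPath. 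Thus the identity map is a polynomial-time many-one reduction from the complement of \probRestlessPath to \probRSeparator restricted to $k=0$.

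By the result of \citet{HMZ19}, \probRestlessPath is \NP-hard for every fixed $\Delta \geq 1$, and hence its complement is \classcoNP-hard for every fixed $\Delta \geq 1$. Combining this with the reduction above yields the claimed \classcoNP-hardness of \probRSeparator for every $\Delta \geq 1$ even when $k=0$. There is essentially no technical obstacle here: the only point worth a moment's thought is whether the \NP-hardness of \probRestlessPath is genuinely stated for \emph{every} $\Delta \geq 1$ rather than only for a particular value; if a specific $\Delta_0$ were the base case, a simple padding argument (e.g., adjoining empty layers) would lift the hardness to all $\Delta \geq 1$.
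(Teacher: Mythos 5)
Your proof is correct and follows essentially the same route as the paper: observe that \probRSeparator with $k=0$ is exactly the complement of \probRestlessPath (the only candidate separator is the empty set), and then invoke the \NP-hardness of \probRestlessPath for every fixed $\Delta\ge 1$ due to \citet{HMZ19}. Your closing remark about lifting hardness from a fixed $\Delta_0$ to all larger $\Delta$ by padding with trivial layers is also the same device the paper uses in \cref{obs:rsep:delta}.
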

This implies that \probRSeparator is presumably \emph{not} contained in~\NP.

Furthermore, it is easy to observe that computational hardness of \probRestlessPath for some fixed value of $\Delta$ implies hardness for all larger values of $\Delta$.
This allows us to construct hardness reductions for small fixed values of~$\Delta$ and still obtain general hardness results. The proof of this observation is essentially the same as the proof of an analogous result for \probRestlessPath by \citet{HMZ19}.
\begin{observation}\label{obs:rsep:delta}
	For every fixed $\Delta$, \probRSeparator on instances $(\TG, s,z,k,\Delta+1)$ is computationally at least as hard as 
	\probRSeparator on instances $(\TG, s,z,k,\Delta)$.
\end{observation}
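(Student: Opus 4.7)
The plan is to exhibit a polynomial-time many-one reduction from \probRSeparator with waiting-time parameter $\Delta$ to \probRSeparator with waiting-time parameter $\Delta + 1$. Given an instance $(\TG, s, z, k, \Delta)$, I would construct a temporal graph $\TG'$ on a (possibly extended) vertex set such that
\[
(\TG, s, z, k, \Delta) \text{ is a \yes-instance} \iff (\TG', s, z, k, \Delta + 1) \text{ is a \yes-instance.}
\]
The construction I have in mind mirrors the one used by \citet{HMZ19} for \probRestlessPath: I modify the time stamps of $\TG$ (e.g.\ by a suitable stretching of the timeline and/or subdividing each time edge with a dedicated auxiliary vertex whose incident time edges are placed so that traversing the gadget consumes a prescribed amount of the wait budget) so that the new waiting-time budget $\Delta+1$ in $\TG'$ is exhausted in exactly the same way as the old budget $\Delta$ in $\TG$.

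The key step, and the heart of the proof, is to establish the following pointwise correspondence of restless paths under vertex deletions: for every $S \subseteq V(\TG) \setminus \{s, z\}$,
\[
\TG - S \text{ has a }\Delta\text{-restless }(s,z)\text{-path} \iff \TG' - S \text{ has a }(\Delta{+}1)\text{-restless }(s,z)\text{-path}.
\]
This is exactly the statement proved by \citet{HMZ19} for \probRestlessPath (applied here after deleting $S$). Once this equivalence is in hand, I would immediately conclude that $S$ is a $\Delta$-restless \nonstrsep{s,z} for $\TG$ iff $S$ is a $(\Delta+1)$-restless \nonstrsep{s,z} for $\TG'$, and hence the separator sizes coincide.

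The main obstacle is ensuring that the construction does not spoil the correspondence on separators. Concretely, one must argue that any auxiliary vertices introduced by the gadget cannot profitably be placed into a separator: this is typically achieved by giving each such vertex a unique ``role'' (e.g.\ making it the interior of a length-$2$ subdivision of a single time edge), so that deleting it is never better than deleting one of the two original endpoints of the subdivided edge. Hence, an optimal separator in $\TG'$ may be assumed to lie entirely inside $V(\TG) \setminus \{s, z\}$, and the equivalence of \yes-instances follows. The reduction is clearly computable in polynomial time, which gives the claimed relative hardness statement.
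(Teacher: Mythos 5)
Your main idea coincides with the paper's, but only along one of the two branches you hedge between. The paper's proof is exactly your ``stretching of the timeline'' option: $\TG'$ is obtained from $\TG$ by inserting one trivial (edgeless) layer after every $\Delta$ consecutive layers. Since this leaves the vertex set untouched, the transformation commutes with deleting any $S\subseteq V\setminus\{s,z\}$, so your displayed pointwise equivalence follows immediately from the path-level statement of \citet{HMZ19} applied to $\TG-S$, the parameter $k$ is unchanged, and no further argument about separators is needed. Where your proposal has a genuine weak spot is the ``and/or subdividing each time edge with a dedicated auxiliary vertex'' branch: this is not only unnecessary, it endangers the equivalence you need. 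Deleting a subdivision vertex simulates an \emph{edge} deletion, which the original problem does not allow, and your exchange argument (replace the auxiliary vertex by one of the two original endpoints) fails precisely when both endpoints are $s$ and $z$: if $\{s,z\}$ is a time edge of $\TG$, then no vertex deletion destroys that $\Delta$-restless \nonstrpath{s,z}, so the instance is a \no-instance, while in the subdivided graph the auxiliary vertex on that edge can be deleted, potentially turning it into a \yes-instance (e.g., $\TG$ consisting of the single time edge $\{s,z\}$ with $k=1$). So you should commit to the pure time-stamp stretching and drop the subdivision gadget; with that choice your argument is essentially the paper's proof.
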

\begin{proof}
	The result immediately follows from the observation that a temporal graph~$\TG$ contains a $\Delta$-restless \nonstrpath{s,z} if and only if the temporal graph $\TG'$ contains a $(\Delta+1)$-restless \nonstrpath{s,z}, where
	$\TG'$ is obtained from $\TG$ by inserting one trivial (that is, edgeless) layer
	after every $\Delta$ consecutive layers.
\end{proof}

Finally, we can also observe that \probRSeparator is fixed-parameter tractable when parameterized by the number $|V|$ of vertices. We can check for each vertex set size $k$ whether it is $\Delta$-restless \nonstrsep{s,z}. We remove it from the input temporal graph and then use an FPT-algorithm for \probRestlessPath when parameterized by the number $|V|$ of vertices~\cite{HMZ19} to verify whether there is no $\Delta$-restless \nonstrpath{s,z}.
\begin{observation}
\probRSeparator parameterized by the number $|V|$ of vertices is fixed-parameter tractable. 
\end{observation}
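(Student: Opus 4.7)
The plan is to combine brute-force enumeration of candidate separators with the FPT-algorithm for \probRestlessPath parameterized by $|V|$ that was established by \citet{HMZ19}. Concretely, the input parameter is $n := |V|$, so we may afford running time of the form $f(n)\cdot |\TG|^{O(1)}$ for some computable function~$f$.

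First I would enumerate every subset $S\subseteq V\setminus\{s,z\}$ of cardinality at most $k$. Since $k\le n$ in any non-trivial instance (otherwise we can immediately return \yes), there are at most $2^{n}$ such candidate sets, and they can be listed in time $2^{n}\cdot n^{O(1)}$. For each candidate $S$, I would construct the temporal graph $\TG - S$ in polynomial time and then invoke the FPT-algorithm of \citet{HMZ19} for \probRestlessPath on $(\TG - S, s, z, \Delta)$ parameterized by the number of vertices; by assumption this runs in time $g(n)\cdot |\TG|^{O(1)}$ for some computable $g$. If the algorithm reports that no $\Delta$-restless temporal $(s,z)$-path exists in $\TG - S$, then $S$ is a valid $\Delta$-restless \nonstrsep{s,z} of size at most $k$ and we accept; otherwise we continue with the next candidate. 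If the enumeration finishes without finding a valid separator, we reject.

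Correctness is immediate: the algorithm accepts if and only if some vertex set of size at most $k$ certifies the instance as a \yes{}-instance, which is exactly the definition of \probRSeparator. The total running time is bounded by $2^{n}\cdot g(n)\cdot |\TG|^{O(1)}$, which is of the form $f(n)\cdot |\TG|^{O(1)}$ and therefore establishes membership in \FPT{} with respect to~$|V|$. There is no real obstacle in this argument; the only thing to note is that without the subroutine from \citet{HMZ19} we would be stuck, because by \cref{obs:paraconp} even the verification step is coNP-hard in general, so a direct polynomial-time certificate check is unavailable, and the parameter $|V|$ is precisely what makes both the enumeration and the verification simultaneously tractable.
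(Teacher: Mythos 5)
Your proposal is correct and follows essentially the same approach as the paper: enumerate all candidate vertex sets of size at most $k$ and, for each, verify separation by running the FPT-algorithm of \citet{HMZ19} for \probRestlessPath parameterized by $|V|$ on $\TG-S$. The running-time accounting and the remark that the verification step needs the parameterized subroutine (due to its coNP-hardness in general) are both accurate.
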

However, we presumably cannot obtain a polynomial kernel for the parameter number $|V|$ of vertices since we can observe the following kernelization lower bound for \probRSeparator.
\begin{observation}
\probRSeparator parameterized by the number $|V|$ of vertices does not admit a polynomial kernel for all $\Delta\ge 1$ unless \NoKernelAssume. 
\end{observation}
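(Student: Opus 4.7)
The plan is to rule out a polynomial kernel via the standard OR-cross-composition framework. Since a parameterized problem admits a polynomial kernel if and only if its complement does, it suffices to prove the analogous statement for the complement of \probRSeparator. I will OR-cross-compose the NP-hard problem \probRestlessPath into the complement of \probRSeparator parameterized by $|V|$. The polynomial equivalence relation I use declares two \probRestlessPath instances equivalent iff they share the same vertex count $n$, the same lifetime $\lifetime$, and the same waiting-time bound $\Delta$; this has polynomially many classes per input length, as required.

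Given $t$ equivalent input instances $(\TG_i, s_i, z_i, \Delta)$ with $|V(\TG_i)| = n$ and $\lifetime(\TG_i) = \lifetime$, the composition will produce a single temporal graph $\TG^*$ on $n$ vertices, obtained by merging all sources $s_i$ into a common vertex $s$, all sinks $z_i$ into a common vertex $z$, and identifying the remaining vertices through arbitrary bijections. Time is partitioned into $t$ consecutive \emph{windows} of length $\lifetime$, separated by $\Delta$ empty layers; in window $i$ the graph $\TG^*$ reproduces the layers of $\TG_i$ under the chosen bijection. Setting $k = 0$, the output instance is $(\TG^*, s, z, 0, \Delta)$. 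The key correctness claim is that $\TG^*$ admits a $\Delta$-restless temporal $(s,z)$-path iff some $\TG_i$ admits a $\Delta$-restless temporal $(s_i,z_i)$-path. The reverse direction is immediate. For the forward direction I will argue that the $\Delta$ gap layers confine any restless path to a single window: if the last time-edge used in window $i$ occurs at time $t$, then the next non-empty layer is at time $t + \Delta + 1$, which violates $t_{i+1} \leq t_i + \Delta$; hence the path must terminate inside window $i$ and pulls back to a restless path in $\TG_i$.

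Because $|V(\TG^*)| = n \leq \max_i |x_i|$, the output parameter is polynomially bounded as required by the framework. The OR-cross-composition theorem then yields that the complement of \probRSeparator parameterized by $|V|$ admits no polynomial kernel unless $\NPCoNPAssume$, and the closure of polynomial kernelization under taking complements transfers this conclusion to \probRSeparator itself. The main point requiring care is the window-confinement argument (which pins down the gap width to exactly $\Delta$ empty layers); the rest of the construction and the framework bookkeeping are routine.
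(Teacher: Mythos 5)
Your proposal is correct, but it takes a genuinely different route from the paper. The paper disposes of this observation in two sentences: it cites the known result of Casteigts, Himmel, Molter, and Zschoche that \probRestlessPath parameterized by $|V|$ admits no polynomial kernel for any $\Delta\ge 1$ unless \NoKernelAssume, and then transfers it using the fact that \probRSeparator with $k=0$ is exactly the complement of \probRestlessPath, so a polynomial kernel for the separator problem would yield one for the path problem. You instead re-derive the lower bound from scratch via an OR-cross-composition of \probRestlessPath into the complement of \probRSeparator: identifying all instances on a common vertex set of size $n$, laying them out in consecutive time windows separated by $\Delta$ trivial layers, and setting $k=0$; this is essentially the standard ``sequencing in time with $\Delta$-gaps'' composition (the same trick as the paper's insertion of trivial layers in Observation~\ref{obs:rsep:delta}) that underlies the cited lower bound for \probRestlessPath in the first place. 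Your bookkeeping is in order: the equivalence relation (same $n$, $\lifetime$, $\Delta$) has polynomially many classes, the parameter $|V|=n$ is bounded by the maximum input size, the OR-property holds, and the transfer between a problem and its complement is legitimate since kernelizations are preserved under complementation. What each approach buys: the paper's argument is a short citation-based reduction and needs no framework verification, while yours is self-contained, makes the mechanism explicit, and in fact yields the stronger conclusion that even polynomial compressions are excluded; the price is that you duplicate known work and must argue the window-confinement step. On that step, your phrasing is slightly off---the next non-empty layer after the last edge used in window $i$ may well lie inside window $i$; the correct statement is that every time stamp occurring in window $i+1$ or later exceeds every time stamp of window $i$ by at least $\Delta+1$, so no restless path can use edges from two distinct windows---but this is a wording issue, not a gap, and the intended argument is sound.
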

This follows directly from the result by \citet{HMZ19} that \probRestlessPath does not admit a polynomial kernel when parameterized by the number $|V|$ of vertices for all $\Delta\ge 1$ unless \NoKernelAssume. This follows from the observation that \probRSeparator with $k=0$ is the complement of \probRestlessPath and hence a polynomial kernel for \probRSeparator parameterized by $|V|$ would also be a polynomial kernel for \probRestlessPath parameterized by $|V|$.

\section{Computational Complexity of Restless Temporal Separators}\label{sec:restlesssep}

In this section we investigate the computational complexity of \probRSeparator. The fact that the problem is both NP-hard and coNP-hard as observed in the previous section already suggests that \probRSeparator is located somewhere higher in the polynomial time hierarchy. Indeed we can show that \probRSeparator is $\Sigma^\text{P}_2$-complete. This implies, for example, that we presumably cannot use SAT-solvers or ILP-solvers to compute $\Delta$-restless \nonstrseps{s,z}. To show $\Sigma^\text{P}_2$-hardness, we give a reduction from \textsc{$\exists\forall$-SAT}, where we are given a Boolean formula $\phi$ in conjunctive normal form and a partition of variables of $\phi$ into two sets $X$ and $Y$. Then we are asked to decide whether there exists an assignment for all variables in $X$ such that for all possible assignments for the variables in $Y$, the formula $\phi$ evaluates to \truevalue. The very rough idea of our reduction is that the vertices selected for the separator correspond to an assignment for the variables in $X$ and if there is an assignment for the variables in $Y$ such that $\phi$ evaluates to \falsevalue, then the temporal graph should still contain a $\Delta$-restless \nonstrpath{s,z} after the separator vertices are removed.

\begin{theorem}\label{thm:sigmap2}
\probRSeparator is $\Sigma^\text{P}_2$-complete for all $\Delta\ge 1$ even if every edge has only one time stamp.
\end{theorem}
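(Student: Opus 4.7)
The plan is to prove the theorem in two halves: membership in $\Sigma_2^\text{P}$, which follows directly from the problem's quantifier structure, and $\Sigma_2^\text{P}$-hardness by reduction from \textsc{$\exists\forall$-SAT}.

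For membership I would use the characterization $\Sigma_2^\text{P} = \text{NP}^{\text{NP}}$. Given an instance $(\TG, s, z, k, \Delta)$, an $\text{NP}^{\text{NP}}$-machine existentially guesses a candidate separator $S \subseteq V \setminus \{s, z\}$ of size at most $k$ and then invokes a \classcoNP{}-oracle to verify that $\TG - S$ contains no $\Delta$-restless temporal $(s,z)$-path. The oracle call is a legitimate \classcoNP{} question, since \probRestlessPath is in \NP by \citet{HMZ19}. Hence \probRSeparator lies in $\Sigma_2^\text{P}$.

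For hardness I would reduce from \textsc{$\exists\forall$-SAT}, following the intuition sketched just before the theorem statement. Given $\phi$ with variable partition $X, Y$, the construction combines two gadget families. First, an \emph{$X$-selection gadget}: for each $x_i \in X$ we introduce a pair of vertices $x_i^T, x_i^F$ placed on short mandatory temporal $(s,z)$-sub-paths (with time stamps close enough to be $\Delta$-restless) arranged so that every size-$k$ separator must contain at least one vertex from each pair. Choosing the budget $k := |X|$ makes this tight: a valid $k$-separator contains exactly one vertex per pair and thereby encodes a total $X$-assignment (say, $x_i^F \in S$ iff $x_i$ is set to true). Second, borrowing from the restless-path \NP-hardness construction of \citet{HMZ19}, we attach $Y$-variable gadgets and clause-verification gadgets so that a $\Delta$-restless temporal $(s, z)$-path exists in $\TG - S$ precisely when there is a $Y$-assignment that, together with the $X$-assignment read off from $S$, makes $\phi$ evaluate to \falsevalue. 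This yields
\[
    (\TG, s, z, k, 1)\text{ is a yes-instance} \iff \exists X\,\forall Y\,\phi(X, Y),
\]
and \cref{obs:rsep:delta} lifts the hardness from $\Delta = 1$ to every $\Delta \geq 1$. The ``one time stamp per edge'' condition is arranged by placing every gadget edge at a fresh time step throughout the construction.

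The most delicate step, and the main obstacle I anticipate, is the $Y$-selection and clause gadget. Since a restless temporal \emph{path} visits every vertex at most once, the path cannot simply revisit a $Y$-variable gadget to look up its value for each clause; the waiting-time bound $\Delta$ must instead couple the $Y$-choices made early in the traversal with the literal checks performed later, so that any $\Delta$-restless $(s, z)$-path implicitly commits to a globally consistent $Y$-assignment. A secondary but nontrivial obstacle is making the $X$-selection gadget tight: the construction has to rule out that a size-$k$ separator benefits from putting two vertices into the same pair or from spending part of its budget outside the $X$-pairs, typically enforced by routing the mandatory $(s,z)$-sub-paths so that any such deviation leaves some restless temporal path intact.
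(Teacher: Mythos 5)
Your overall strategy coincides with the paper's: containment via an NP-machine that guesses the separator $S$ and queries an oracle for \probRestlessPath on $\TG-S$, and hardness via a reduction from \textsc{$\exists\forall$-SAT} in which a budget $k=|X|$ over vertex pairs forces the separator to encode an $X$-assignment, while a surviving $\Delta$-restless path encodes a $Y$-assignment together with a falsified clause. The membership half is fine. The hardness half, however, has a genuine gap: the step you explicitly defer (``the most delicate step'') is the entire technical content of the proof, namely a concrete mechanism by which a restless \emph{path} commits to one globally consistent $Y$-assignment and then certifies a falsified clause against both that assignment and the deleted $X$-vertices. ``Borrowing from the construction of \citet{HMZ19}'' does not supply this: their reduction shows hardness of path existence alone and contains no coupling between a path and a separator choice. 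The paper's mechanism is quite specific: all exists- and forall-gadget edges carry time stamp $1$; the forall gadget is a chain of diamonds $s_1,\ldots,s_{n_Y+1}$ through $y_i^{(T)}/y_i^{(F)}$; the unique stamp-$2$ edge $\{s_{n_Y+1},c_1^{(1)}\}$ together with $\Delta=1$ forces any surviving path to traverse the whole forall gadget at time $1$, thereby fixing one vertex per $Y$-pair; a clause selection path uses stamps $2i+1,2i+2$, and each clause gadget is a path through its literal vertices all at stamp $2i+1$, the gap of at least two between different clause stamps preventing shortcuts. Consistency of the $Y$-assignment is then enforced purely by the fact that a path may not revisit vertices, and the $X$-check by the deleted separator vertices. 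None of this is reconstructed in your outline, and it is not routine to fill in.

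Two further points. First, your remark that the ``one time stamp per edge'' condition is arranged ``by placing every gadget edge at a fresh time step'' points in the wrong direction: the theorem's restriction only requires each edge to appear in a single layer, not that stamps be distinct, and the coupling above crucially relies on many edges \emph{sharing} stamp $1$ (so the $Y$-assignment is fixed before any clause is examined) and on each clause gadget living at a single stamp; with pairwise distinct stamps and $\Delta=1$ the restlessness constraints would have to be re-engineered from scratch. Second, the $X$-gadget tightness you list as a nontrivial obstacle is in fact the easy part: the $n_X$ paths $s$--$x_i^{(T)}$--$x_i^{(F)}$--$z$ at stamp $1$ are internally vertex-disjoint, so any separator needs one internal vertex from each, and $k=n_X$ then forces exactly one vertex per pair and no budget spent elsewhere.
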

\begin{proof}
We present a polynomial-time reduction from the $\Sigma^\text{P}_2$-complete problem \textsc{$\exists\forall$-SAT}~\cite{AB09,Sto76}, where we are given a Boolean formula $\phi$ in conjunctive normal form and the variables of $\phi$ are partitioned into two sets $X$ and $Y$, and we are asked to decide whether there exists an assignment for all variables in $X$ such that for all possible assignments for the variables in $Y$, the formula $\phi$ evaluates to \truevalue.

Let $\phi(X,Y)$ denote an instance of \textsc{$\exists\forall$-SAT}, let $n_X=|X|$, $n_Y=|Y|$, and let $m$ be the number of clauses in $\phi$. We assume that no clause of $\phi$ contains a variable several times. We construct a temporal graph $\TGcompact$ with $\lifetime=2m+1$, consisting of three gadgets. We start with the ``exists gadget'' in which we have to select the vertices of the $\Delta$-restless \nonstrsep{s,z}. Intuitively, this chooses an assignment for the variables in $X$. The next gadget is the ``forall gadget'' which must be passed by every $\Delta$-restless \nonstrpath{s,z}. This gadget can be traversed in $2^{n_Y}$ ways which, intuitively, represent all possible assignments for the variables in $Y$. The last gadget is the clause gadget which, intuitively, a $\Delta$-restless \nonstrpath{s,z} can only pass if there is a clause that is not satisfied. We set $\Delta=1$ and $k=n_X$. We next give formal descriptions of the gadgets.

\proofpar{Exists Gadget.} We start by creating two vertices $s$ and $z$. For every variable $x_i\in X$ we create two vertices $x^{(T)}_i$ and $x^{(F)}_i$ and we add edges $\{s,x^{(T)}_i\}$, $\{x^{(T)}_i,x^{(F)}_i\}$, and $\{x^{(F)}_i,z\}$ to~$E_1$. This already completes the construction of the exists gadget. We can see that we created~$n_X$ internally vertex-disjoint $\Delta$-restless \nonstrpath{s,z}s. Since we set~$k=n_X$, we have that every $\Delta$-restless \nonstrsep{s,z} has to contain one vertex from each of these paths.

\proofpar{Forall Gadget.} For every variable $y_i\in Y$ we create two vertices $y^{(T)}_i$ and $y^{(F)}_i$. We further create $n_Y+1$ vertices $s_1, s_2, \ldots s_{n_Y+1}$. For all $i\in[n_Y]$ we add edges $\{s_i,y^{(T)}_i\}$, $\{s_i,y^{(F)}_i\}$, $\{y^{(T)}_i,s_{i+1}\}$, and $\{y^{(F)}_i,s_{i+1}\}$ to $E_1$. We further add edge $\{s,s_1\}$ to $E_1$. This completes the construction of the forall gadget. We can see that there are $2^{n_Y}$ different $\Delta$-restless temporal paths from $s_1$ to $s_{n_Y+1}$. Intuitively, each one of these represents an assignment for the variables in $Y$.

\proofpar{Clause Gadget.} For every clause $c_i$ of $\phi$ we create two vertices $c_i^{(1)}$ and $c_i^{(2)}$. For every~$i\in[m]$ we add edge $\{c_i^{(1)},c_i^{(2)}\}$ to $E_{2i+1}$ and if $i<m$, then we add edge $\{c_i^{(2)},c_{i+1}^{(1)}\}$ to~$E_{2i+2}$. We further add edge $\{s_{n_Y+1},c_1^{(1)}\}$ to $E_2$. We call this part of the gadget the \emph{clause selection path}.

Let $c_i$ be a clause for some $i\in [m]$. Without loss of generality let $c_i$ contain variables $x_1,\ldots, x_{j_1}$ and $y_1,\ldots,y_{j_2}$. Then we add the following edges to $E_{2i+1}$: 
\begin{itemize}
\item If $x_1$ appears non-negated in $c_i$, then we add edge $\{c_i^{(2)},x_1^{(F)}\}$, otherwise we add edge $\{c_i^{(2)},x_1^{(T)}\}$.
\item For all $j\in[j_1-1]$, if $x_j$ appears non-negated in $c_i$, then set $v_j=x_j^{(F)}$, otherwise set $v_j=x_j^{(T)}$. If $x_{j+1}$ appears non-negated in $c_i$, then set $v_{j+1}=x_{j+1}^{(F)}$, otherwise set $v_{j+1}=x_{j+1}^{(T)}$. We add edge $\{v_j,v_{j+1}\}$.
\item If $x_{j_1}$ appears non-negated in $c_i$, then set $v=x_{j_1}^{(F)}$, otherwise set $v=x_{j_1}^{(T)}$. If $y_1$ appears non-negated in $c_i$, then set $w=y_1^{(F)}$, otherwise set $w=y_1^{(T)}$. We add edge $\{v,w\}$.
\item For all $j\in[j_2-1]$, if $y_j$ appears non-negated in $c_i$, then set $v_j=y_j^{(F)}$, otherwise set $v_j=y_j^{(T)}$. If $y_{j+1}$ appears non-negated in $c_i$, then set $v_{j+1}=y_{j+1}^{(F)}$, otherwise set $v_{j+1}=y_{j+1}^{(T)}$. We add edge $\{v_j,v_{j+1}\}$.
\item If $y_{j_2}$ appears non-negated in $c_i$, then we add edge $\{y_{j_2}^{(F)},z\}$, otherwise we add edge $\{y_{j_2}^{(T)},z\}$.
\end{itemize}
We do this for all clauses in $\phi$. This completes the construction of the clause gadget. Intuitively, a $\Delta$-restless \nonstrpath{s,z} should only be able to traverse the clause gadget if there is a clause that is not satisfied.

This finishes the construction of $\TGcompact$. The construction is illustrated in \cref{fig:sigmap2}. Recall that $\Delta=1$ and $k=n_X$. It is easy to check that $\TG$ can be constructed in polynomial time and that every edge has at most one time stamp.

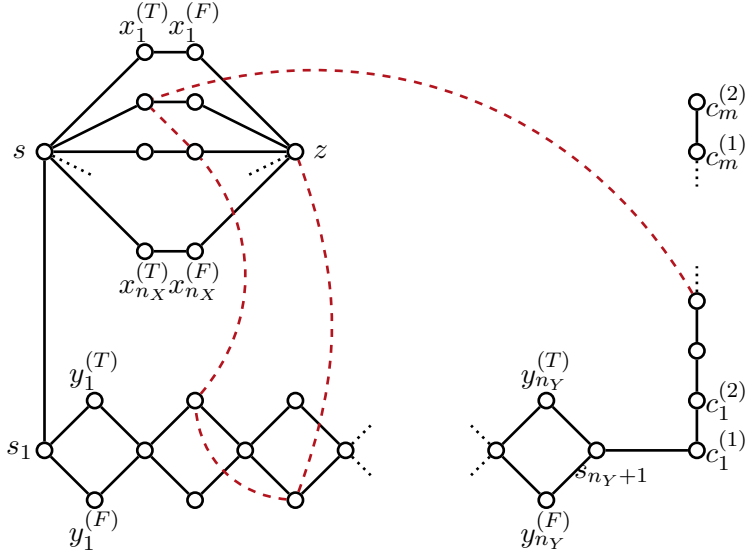
\begin{figure}[t]
\begin{center}
\begin{tikzpicture}[line width=1pt, scale=.33, yscale=1]
    \node (A1l) at (0,1.2) {$x_1^{(T)}$}; 
    \node (A2l) at (2,1.2) {$x_1^{(F)}$}; 
    \node[vert2] (A1) at (0,0) {}; 
    \node[vert2] (A2) at (2,0) {}; 
    \node[vert2] (B1) at (0,-2) {}; 
    \node[vert2] (B2) at (2,-2) {};
    \node[vert2] (C1) at (0,-4) {}; 
    \node[vert2] (C2) at (2,-4) {};
    \node[vert2] (D1) at (0,-8) {}; 
    \node[vert2] (D2) at (2,-8) {};
    \node (D1l) at (0,-9.2) {$x_{n_X}^{(T)}$}; 
    \node (D2l) at (2,-9.2) {$x_{n_X}^{(F)}$};
    
    \node[vert2] (S) at (-4,-4) {}; 
    \node[vert2] (Z) at (6,-4) {};
    \node (Sl) at (-5,-4) {$s$}; 
    \node (Zl) at (7,-4) {$z$};
    
    \node[vert2] (S1) at (-4,-16) {};
    \node (S1l) at (-5,-16) {$s_1$};
    \node (V11l) at (-2,-12.8) {$y_1^{(T)}$};
    \node (V12l) at (-2,-19.2) {$y_1^{(F)}$}; 
    \node[vert2] (V11) at (-2,-14) {};
    \node[vert2] (V12) at (-2,-18) {}; 
    \node[vert2] (S2) at (0,-16) {};
    \node[vert2] (V21) at (2,-14) {};
    \node[vert2] (V22) at (2,-18) {}; 
    \node[vert2] (S3) at (4,-16) {};
    \node[vert2] (V31) at (6,-14) {};
    \node[vert2] (V32) at (6,-18) {}; 
    \node[vert2] (S4) at (8,-16) {};
    
    \node[vert2] (S5) at (14,-16) {};
    \node[vert2] (V51) at (16,-14) {};
    \node[vert2] (V52) at (16,-18) {}; 
    \node (V51l) at (16,-12.8) {$y_{n_Y}^{(T)}$};
    \node (V52l) at (16,-19.2) {$y_{n_Y}^{(F)}$}; 
    \node[vert2] (S6) at (18,-16) {};
    \node (S6l) at (18.6,-17) {$s_{n_Y+1}$};
    
    \node[vert2] (CC1) at (22,-16) {};
    \node[vert2] (CC2) at (22,-14) {};
    \node[vert2] (CC3) at (22,-12) {};
    \node[vert2] (CC4) at (22,-10) {};
    \node[vert2] (CC5) at (22,-4) {};
    \node[vert2] (CC6) at (22,-2) {};
    \node (CC1l) at (23.2,-16) {$c_1^{(1)}$};
    \node (CC2l) at (23.2,-14) {$c_1^{(2)}$};
    \node (CC5l) at (23.2,-4) {$c_m^{(1)}$};
    \node (CC6l) at (23.2,-2) {$c_m^{(2)}$};
    
    \draw (S) -- (A1);
    \draw (S) -- (B1);
    \draw (S) -- (C1);
    \draw (S) -- (D1);
    \draw[dotted] (S) -- (-2,-5);
    \draw (Z) -- (A2);
    \draw (Z) -- (B2);
    \draw (Z) -- (C2);
    \draw (Z) -- (D2);
    \draw[dotted] (Z) -- (4,-5);
    \draw (A1) -- (A2);
    \draw (B1) -- (B2);
    \draw (C1) -- (C2);
    \draw (D1) -- (D2);
    
    \draw (S) -- (S1);
    \draw (S1) -- (V11);
    \draw (S1) -- (V12);
    \draw (S2) -- (V11);
    \draw (S2) -- (V12);
    \draw (S2) -- (V21);
    \draw (S2) -- (V22);
    \draw (S3) -- (V21);
    \draw (S3) -- (V22);
    \draw (S3) -- (V31);
    \draw (S3) -- (V32);
    \draw (S4) -- (V31);
    \draw (S4) -- (V32);
    \draw[dotted] (S4) -- (9,-17);
    \draw[dotted] (S4) -- (9,-15);
    \draw[dotted] (S5) -- (13,-17);
    \draw[dotted] (S5) -- (13,-15);
    \draw (S5) -- (V51);
    \draw (S5) -- (V52);
    \draw (S6) -- (V51);
    \draw (S6) -- (V52);
    
    \draw (S6) -- (CC1);
    \draw (CC1) -- (CC2);
    \draw (CC2) -- (CC3);
    \draw (CC3) -- (CC4);
    \draw[dotted] (CC4) -- (22,-8.6);
    \draw[dotted] (CC5) -- (22,-5.4);
    \draw (CC5) -- (CC6);
    
    \path[bend right=40,color=ourred,dashed] (CC4) edge (B1);
    \path[bend left=40,color=ourred,dashed] (V32) edge (V21);
    \path[bend right=40,color=ourred,dashed] (V21) edge (C2);
    \path[color=ourred,dashed] (C2) edge (B1);
    \path[bend right=20,color=ourred,dashed] (V32) edge (Z);
\end{tikzpicture}
    \end{center}
    \caption{Visualization of parts of the underlying graph of the temporal graph $\TG$ constructed in the reduction of \cref{thm:sigmap2}. The red dashed path corresponds to the clause gadget for clause $c_2=(\neg x_2\vee x_3\vee \neg y_2\vee y_3)$ (without the clause selection path).}\label{fig:sigmap2}
\end{figure}

\proofpar{Correctness.}
Now we show that $\TG$ admits a $\Delta$-restless \nonstrsep{s,z} of size at most~$k$ if and only if $\phi$ is a \yes-instance of \textsc{$\exists\forall$-SAT}.

\RArrow Assume that there is an assignment for the variables in $X$ such that for all assignments for the variables of $Y$ we have that $\phi$ evaluates to \truevalue. We construct a $\Delta$-restless \nonstrsep{s,z} $S$ for $\TG$ as follows. For each $i\in[n_X]$, if variable~$x_i$ is assigned the value \truevalue, then we add the vertex $x_i^{(T)}$ to~$S$, otherwise we add $x_i^{(F)}$ to~$S$. Clearly, we have that $|S|=n_X=k$. In the following, we show that $S$ is a $\Delta$-restless \nonstrsep{s,z} for $\TG$.

Assume for contradiction that there is a $\Delta$-restless \nonstrpath{s,z} $P$ in $\TG-S$. It is easy to see that all $\Delta$-restless \nonstrpath{s,z}s in $\TG$ that only use edges from the exists gadget are destroyed in $\TG-S$ since from every such path, we put one vertex into $S$. Observe that all time edges adjacent to $z$ that are not part of the exists gadget have a time stamp of three or larger. Hence, to reach a time edge with time step two,~$P$ has to pass the forall gadget to reach time edge $\{s_{n_Y+1},c_1^{(1)}\}$, which is the only time edge with time stamp two. From this it follows that for every $i\in[n_Y]$ we have that either~$y_i^{(T)}\in V(P)$ or $y_i^{(F)}\in V(P)$. Then the path~$P$ enters the clause selection path of the clause gadget. To reach $z$, the path $P$ has to leave this path at some vertex~$c_j^{(2)}$ for some~$j\in[m]$ (meaning that $c_j^{(2)}\in V(P)$ and $c_{j+1}^{(1)}\notin V(P)$). We claim that this implies that clause $c_j$ is not satisfied if the variables from $Y$ are assigned the following truth values: for each $i\in [n_Y]$, if $y_i^{(T)}\in V(P)$, then we set $y_i$ to \truevalue, otherwise we set~$y_i$ to \falsevalue. Assuming that $c_j^{(2)}\in V(P)$ and $c_{j+1}^{(1)}\notin V(P)$, the only way to reach $z$ from~$c_j^{(2)}$ is through vertices that correspond to the variables appearing in clause $c_i$, since the time stamps from all paths from the clause selection path to~$z$ differ by at least two. More specifically, for each variable $x_i$ ($y_i$) appearing in~$c_j$, we have that~$V(P)$ contains vertex~$x_i^{(F)}$~($y_i^{(F)}$) if~$x_i$ ($y_i$) appears non-negated in $c_j$ and $V(P)$ contains vertex~$x_i^{(T)}$~($y_i^{(T)}$) otherwise. This means for the variables~$x_i$ that they are set to truth values that do not satisfy clause~$c_j$, otherwise the corresponding vertices would be contained in the separator $S$. For the variables $y_i$ this means the assignment we constructed earlier also sets them to truth values that do not satisfy clause~$c_j$, otherwise the corresponding vertices would have been used by $P$ when the path was passing the forall gadget at time step one. Hence, we have found an assignment for the variables in $Y$ such that together with the given assignment for the variables in~$X$, the formula $\phi$ evaluates to \falsevalue---a contradiction.

\LArrow Let $S\subseteq V\setminus \{s,z\}$ with $|S|\le k$ be a $\Delta$-restless \nonstrsep{s,z} for $\TG$. Let us first look at the exists gadget of $\TG$. It consists of~$n_X$ internally vertex-disjoint $\Delta$-restless \nonstrpath{s,z}s, each one visiting four vertices: $s$, $x^{(T)}_i$, $x^{(F)}_i$, and~$z$ for some $i\in[n_X]$. Of each of these $\Delta$-restless \nonstrpath{s,z}s, one vertex other than~$s$ or~$z$ has to be contained in $S$, otherwise $S$ would not be a $\Delta$-restless \nonstrsep{s,z}. It follows that for all $i\in[n_X]$ either $x^{(T)}_i$ or $x^{(F)}_i$ is contained in $S$ (and also no other vertices are contained in $S$ since $k=n_X)$. This lets us construct an assignment for the variables in~$X$ as follows. For every~$i\in[n_X]$, if $x^{(T)}_i\in S$, then we set~$x_i$ to \truevalue, otherwise we set~$x_i$ to \falsevalue. We claim that using this assignment for the variables in $X$, we have that for all assignments for the variables in $Y$ the formula~$\phi$ evaluates to \truevalue.

Assume for the sake of contradiction that there is an assignment for the variables in $Y$ such that together with the constructed assignment for the variables in $X$, the formula~$\phi$ evaluates to \falsevalue. Then we can construct a $\Delta$-restless \nonstrpath{s,z} in~$\TG-S$ as follows. Starting from~$s$ we traverse the forall gadget as follows. Starting with~$i=1$ to~$n_Y$ we visit $s_i$ and then $y_i^{(T)}$ if $y_i$ is set to \truevalue, and $y_i^{(F)}$ otherwise. Then we visit~$s_{n_Y+1}$. Up until this point, the path only uses time edges with time stamp one. Since $\phi$ evaluates to false, there is at least one clause in $\phi$ that is not satisfied. Let $c_j$ be that clause. We continue our $\Delta$-restless temporal path from $s_{n_Y+1}$ to~$c_j^{(2)}$. Since~$c_j$ evaluates to \falsevalue, the vertices corresponding to the variables in $X$ appearing in~$c_j$ are not contained in $S$, otherwise, by construction, the clause~$c_j$ would evaluate to \truevalue. Similarly, the vertices corresponding to the variables in $Y$ appearing in $c_j$ have not been visited by the path when traversing the forall gadget. Hence, we can continue the $\Delta$-restless temporal path from $c_j^{(2)}$ to $z$---a contradiction.

\proofpar{Containment in $\Sigma^\text{P}_2$.} Our proof so far shows that \probRSeparator is $\Sigma^\text{P}_2$-hard. To show that the problem is $\Sigma^\text{P}_2$-complete, we show that it is contained in $\Sigma^\text{P}_2$. Recall that $\Sigma^\text{P}_2$ contains all problems that can be solved by an NP-machine that has oracle access to an NP-complete problem~\cite{AB09,Sto76}. We can solve an instance $\TGcompact,s,z,k,\Delta)$ of \probRSeparator with such a machine as follows. We non-deterministically guess a set $S\subseteq V$ of size at most~$k$ and then produce an instance $(\TG-S,s,z,\Delta)$ of \probRestlessPath. Since \probRestlessPath is contained in \NP, we can reduce it to the NP-complete problem to which we have oracle access. We use the reduction to produce an equivalent instance of the NP-complete problem we have oracle access to and query the oracle with this instance. If the oracle answers \no, then we have found a $\Delta$-restless \nonstrsep{s,z} of size at most $k$ for $\TG$ and can answer \yes. It is easy to see that the described machine has an accepting path if and only if the \probRSeparator instance is a \yes-instance.
\end{proof}

From a parameterized complexity perspective we can make one rather straightforward observation. Since \probRSeparator generalizes \probSeparator, we know that \probRSeparator parameterized by the separator size $k$ is \wone-hard~\cite{Zsc+19}. However, we can observe that \probRSeparator is even \wtwo-hard when parameterized by the separator size $k$ by a straightforward reduction from \textsc{Hitting Set}, where we model each element of the universe with a vertex and each set by a path through the corresponding vertices. The waiting time $\Delta$ allows us to obtain a one-to-one correspondence between restless temporal paths in the constructed temporal graph and sets in the \textsc{Hitting Set} instance. We remark that the reduction we use to show this result has been used in a very similar way by ~\citet{Zsc17} to show that finding temporal separators of bounded size that destroy all $\Delta$-restless temporal \emph{walks} from $s$ to $z$ is \wtwo-hard when parameterized by the bound on the separator size. 
\begin{proposition}\label{prop:rsep:wtwo}
\probRSeparator parameterized by the separator size~$k$ is \wtwo-hard for all $\Delta\ge 1$.
\end{proposition}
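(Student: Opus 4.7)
The plan is to give a parameterized reduction from \textsc{Hitting Set} (parameterized by solution size), which is well-known to be \wtwo-hard, to \probRSeparator parameterized by separator size $k$. By \cref{obs:rsep:delta} it suffices to prove \wtwo-hardness for $\Delta=1$, as the result then lifts to every $\Delta\ge 1$ by inserting empty layers.

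Given a \textsc{Hitting Set} instance $(U,\mathcal{F},k)$ with $U=\{u_1,\ldots,u_n\}$ and $\mathcal{F}=\{F_1,\ldots,F_m\}$, I would build a temporal graph $\TG$ on vertex set $V=U\cup\{s,z\}$ with two fresh vertices $s$ and $z$. For each $F_i$, I fix an arbitrary enumeration $u_{i_1},\ldots,u_{i_{|F_i|}}$ of its elements and install the time edges of the path $s\to u_{i_1}\to u_{i_2}\to\cdots\to u_{i_{|F_i|}}\to z$ at consecutive time stamps, placing the first edge at time $(i-1)(n+2)+1$. Thus the gadget of $F_i$ occupies the window $[(i-1)(n+2)+1,\,(i-1)(n+2)+|F_i|+1]$, and the windows of distinct sets are pairwise disjoint and separated by at least two empty time steps. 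The separator bound stays at $k$. The construction is clearly polynomial and preserves the parameter exactly.

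For correctness, the key claim is that the $\Delta$-restless temporal $(s,z)$-paths in $\TG$ are precisely the $m$ gadget paths, one per set. The forward direction is immediate because each gadget path uses consecutive time stamps. For the converse, any $\Delta$-restless path must begin with some edge $\{s,u_{i_1}\}$ and, with $\Delta=1$, each subsequent edge must have a time stamp within one of the previous; this forces all edges to lie in $F_i$'s window, since edges from any other gadget are at least two time steps away. Being a simple path it cannot return to $s$, so it is forced to follow $F_i$'s chain all the way to $z$. Consequently, a vertex set $S\subseteq V\setminus\{s,z\}=U$ is a $\Delta$-restless \nonstrsep{s,z} for $\TG$ if and only if $S$ intersects every~$F_i$, i.e., $S$ is a hitting set of $\mathcal{F}$.

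The main (mild) obstacle is exactly the no-shortcut argument above: verifying that no $\Delta$-restless path can mix edges from the gadgets of two different sets. This is dispatched by the time-window spacing, which forces the $\Delta=1$ restlessness condition to confine any admissible walk to a single set's window. Once this is in place, the equivalence with \textsc{Hitting Set} is immediate, and \cref{obs:rsep:delta} lifts the hardness to arbitrary $\Delta\ge 1$.
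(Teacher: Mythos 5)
Your proposal is correct and follows essentially the same route as the paper: a parameter-preserving reduction from \textsc{Hitting Set} on vertex set $U\cup\{s,z\}$ with one temporal $(s,z)$-path per set, time-separated so that with $\Delta=1$ every restless temporal $(s,z)$-path is confined to a single set's gadget, giving a one-to-one correspondence between separators and hitting sets. The only (cosmetic) difference is that the paper places each set's path within a single layer (all edges sharing one time stamp, with a trivial spacer layer in between), whereas you spread each gadget over consecutive time stamps inside disjoint windows; both enforce the same confinement argument, and your explicit appeal to \cref{obs:rsep:delta} to lift from $\Delta=1$ to all $\Delta\ge 1$ matches the paper's intent.
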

\begin{proof}
We present a parameterized polynomial-time reduction from \textsc{Hitting Set}, where we are given a universe set $U$, a family of sets $S_1, \ldots, S_m\subseteq U$, and an integer $h$, and are asked whether there is a \emph{hitting set} $S^\star\subseteq U$ with $|S^\star|\le h$ such that for all $i\in[m]$ we have that $S^\star\cap S_i\neq\emptyset$. \textsc{Hitting Set} is \wtwo-complete when parameterized by $h$~\cite{DF99,PM81}.

Given an instance $(U, (S_i)_{i\in[m]}, h)$ of \textsc{Hitting Set}, we construct a temporal graph $\TGcompact$ with $\lifetime=2m$ as follows. We set $V=U\cup\{s,z\}$ and for each set $S_i$ with $i\in[m]$ we create two layers $G_{2i-1}$ and $G_{2i}$. In layer $G_{2i-1}$ we create a path from $s$ to $z$ that visits all vertices in $S_i$ in an arbitrary order. The layer $G_{2i}$ is trivial. We set $\Delta=1$ and $k=h$. This finishes the construction. It is easy to check that this can be done in polynomial time.

\proofpar{Correctness.} The correctness is straightforward to see. A $\Delta$-restless \nonstrsep{s,z} for $\TG$ has to contain at least one vertex from each set $S_i$ with $i\in[m]$, otherwise there would be a layer that contains a $\Delta$-restless \nonstrpath{s,z}. It follows that a $\Delta$-restless \nonstrsep{s,z} for $\TG$ is a hitting set for $(U, (S_i)_{i\in[m]})$. For the other direction, one has to observe that due to the waiting time restriction $\Delta$ and the trivial layers that are present in $\TG$, each $\Delta$-restless \nonstrpath{s,z} in $\TG$ corresponds to a set $S_i$ for some $i\in[m]$ of the \textsc{Hitting Set} instance. It follows that a hitting set contains at least one vertex from each $\Delta$-restless \nonstrpath{s,z} in~$\TG$.
\end{proof}
We remark that it is open whether \probSeparator parameterized by the separator size $k$ is contained in \wone. Hence, \cref{prop:rsep:wtwo} does not necessarily imply that \probRSeparator parameterized by the separator size $k$ is harder than \probSeparator parameterized by the separator size $k$. We remark that containment of \probRSeparator parameterized by the separator size~$k$ in \wtwo is unlikely since \cref{obs:paraconp} shows that \probRSeparator parameterized by the separator size $k$ is also para-coNP-hard. We conjecture that \probRSeparator parameterized by the separator size $k$ is complete for the parameterized complexity class~$\Sigma^\text{P}_2[k*]$~\cite{de2017parameterized}.

\section{Conclusion}\label{sec:sep:conclusion}

In this work we studied the computational complexity of deciding whether a temporal graph admits separators of bounded size under the restless temporal path model studied by \citet{HMZ19}. We established that \probRSeparator is complete for $\Sigma^\text{P}_2$, a complexity class that is located in the second level of the polynomial time hierarchy. This implies, for example, that we presumably cannot use SAT-solvers or ILP-solvers to compute restless temporal separators. We also provide some preliminary results for the parameterized complexity of \probRSeparator parameterized by the separator size~$k$. We show that the parameterized problem is hard for para-coNP and hard for \wtwo. We conjecture that and leave as an open question for future reseach whether the parameterized problem is complete for~$\Sigma^\text{P}_2[k*]$~\cite{de2017parameterized}.

\bibliographystyle{plainnat}

\bibliography{bibliography}

\end{document}